\newcommand{\wt}{\tilde}
\newcommand{\wh}{\hat}
\renewcommand{\th}[1]{\wh{\wt{#1}}}
\newcommand{\wb}{\bar}
\newcommand{\ut}[1]{{\underaccent{\tilde}{#1}}}
\newcommand{\uh}[1]{{\underaccent{\hat}{#1}}}
\newcommand{\flr}[1]{{\left \lfloor{#1}\right \rfloor}}
\newcommand{\oT}{\mathrm{T}}
\newcommand{\oS}{\mathrm{S}}
\newcommand{\mm}{\mathop{\sf{m}}\nolimits}
\newcommand{\ub}[1]{{\underaccent{\bar}{#1}}}
\newcommand{\bt}[1]{\wt{\wb{#1}}}
\newcommand{\hb}[1]{\wh{\wb{#1}}}
\newtheorem{thm}{Theorem}[section]
\newtheorem{prop}[thm]{Proposition}
\newcommand{\sn}{\mathop{\mathrm{sn}}\nolimits}
\newcommand{\cn}{\mathop{\mathrm{cn}}\nolimits}
\newcommand{\dn}{\mathop{\mathrm{dn}}\nolimits}
\newcommand{\ETA}{\mathop{\mathrm{H}}\nolimits}
\newcommand{\THE}{\mathop{\mathrm{\Theta}}\nolimits}
\newcommand{\crossratio}[4]{
\frac{(#1-#2)(#3-#4)}{(#1-#3)(#2-#4)}
}
\begin{document}

\begin{center}
{\LARGE{The tau function for ABS equations}}
\vskip10mm
{\large James Atkinson\footnote{Contact: james.l.atkinson@gmail.com}, October 28, 2024}\\
\vskip3mm
\end{center}
\noindent
{\bf Abstract.}
The tau-function for quad-equations from the ABS classification is briefly explained.
It is an auxiliary variable that systematically linearises the B\"acklund chain.
Many equations have the same tau function and are unified by transformations constructed on that basis.
Simple examples are included.

\section{Constructing tau}
Recall the ABS equations, listed in Table \ref{list}, are considered in general on quads of $\mathbb{Z}^d$ with dynamics defined by a polynomial of degree one in four variables,
\begin{equation}
Q_{\alpha,\beta}(u,\wt{u},\wh{u},\th{u})=0.\label{qeq}
\end{equation}
Here $(\wt{\phantom{u}},\alpha)$ and $(\wh{\phantom{u}},\beta)$ are representative lattice translation/parameter pairs taken from a list, say $(\oS_1,\alpha_1),\ldots,(\oS_d,\alpha_d)$.

An associated function $\tau$ \cite{AN,AJ} is constructed from a solution $u$ of (\ref{qeq})
by solving auxiliary equations,
\begin{equation}
\frac{\th{\tau}\ut{\tau}}{\wt{\tau}\ut{\wh{\tau}}} = -\frac
{Q_{\alpha,\beta}(u,\wt{u},\wh{u},\wh{\ut{u}})}
{Q_{\alpha,\beta}(u,\ut{u},\wh{u},{\th{u}})},
\qquad
\frac{\th{\tau}\uh{\tau}}{\wh{\tau}\uh{\wt{\tau}}} = -\frac
{Q_{\alpha,\beta}(u,\wt{u},\wh{u},\wt{\uh{u}})}
{Q_{\alpha,\beta}(u,\wt{u},\uh{u},{\th{u}})},\label{taudef}
\end{equation}
where $u\mapsto \ut{u}$ indicates the shift opposite to $u\mapsto\wt{u}$ so that $\wt{\ut{u}}=u$. 
Only a particular solution of (\ref{taudef}) is required, the general solution is obtained by using symmetries of two kinds:
\begin{equation}\label{transform1}
\tau(n,m,\ldots)\rightarrow F(n)\tau(n,m,\ldots), \quad n,m,\ldots\in\mathbb{Z},
\end{equation}
where $F$ is an arbitrary function in one lattice direction,
and
\begin{equation}\label{transform2}
\tau(n,m,\ldots)\rightarrow c^{nm-2{\left \lfloor{nm/2}\right \rfloor}}\tau(n,m,\ldots), \quad n,m,\ldots\in\mathbb{Z},
\end{equation}
where $c$ is an arbitrary constant associated with one pair of lattice directions.

\begin{table}
\begin{tabular}{l|l}
\hline
& $Q_{\alpha,\beta}(u,\wt{u},\wh{u},\th{u})$ \\
\hline
Q4&
$\sn(\alpha)(u\wt{u}+\wh{u}\wh{\wt{u}})-\sn(\beta)(u\wh{u}+\wt{u}\wh{\wt{u}})-$\\&$\qquad \sn(\alpha-\beta)[u\wh{\wt{u}}+\wt{u}\wh{u}-k\sn(\alpha)\sn(\beta)(1+u\wt{u}\wh{u}\wh{\wt{u}})]$\\
Q3& 
$(\alpha-1/\alpha)(u\wt{u}+\wh{u}\th{u})-(\beta-1/\beta)(u\wh{u}+\wt{u}\th{u})-$\\&$ \qquad(\alpha/\beta-\beta/\alpha)[\wt{u}\wh{u}+u\th{u}+\delta^2(\alpha-1/\alpha)(\beta-1/\beta)/4]$\\
Q2& 
$\alpha(u-\wh{u})(\wt{u}-\th{u})-\beta(u-\wt{u})(\wh{u}-\th{u})+$\\&$ \qquad\alpha\beta(\alpha-\beta)(u+\wt{u}+\wh{u}+\th{u}-\alpha^2+\alpha\beta-\beta^2)$\\
Q1& $\alpha(u-\wh{u})(\wt{u}-\th{u})-\beta(u-\wt{u})(\wh{u}-\th{u})+\delta^2 \alpha\beta(\alpha-\beta)$\\
A2&$(\alpha-1/\alpha)(u\wh{u}+\wt{u}\th{u})-(\beta-1/\beta)(u\wt{u}+\wh{u}\th{u}) - (\alpha/\beta-\beta/\alpha)(1+u\wt{u}\wh{u}\th{u})$ \\
A1&$\alpha(u+\wh{u})(\wt{u}+\th{u})-\beta(u+\wt{u})(\wh{u}+\th{u})- \delta^2\alpha\beta(\alpha-\beta)$\\
H3& $\alpha(u\wt{u}+\wh{u}\th{u})-\beta(u\wh{u}+\wt{u}\th{u})+\delta(\alpha^2-\beta^2)$\\
H2& $(u-\th{u})(\wt{u}-\wh{u})-(\alpha-\beta)(u+\wt{u}+\wh{u}+\th{u}+\alpha+\beta)$\\
H1& $(u-\th{u})(\wt{u}-\wh{u})+\alpha-\beta$\\
\hline
\end{tabular}
\caption{ABS equations \cite{ABS,ABS2}, $k\in\mathbb{C}$ is the Jacobi modulus, $\delta\in\{0,1\}$.}
\label{list}
\end{table}

\subsection{Example}
Consider H1,
and the solution on $\mathbb{Z}^{2}$,
\begin{equation}\label{H1sol}
u= np+mq, \quad p^2=\alpha, \quad q^2=\beta.
\end{equation}
The equations for $\tau$ (\ref{taudef}) are,
\begin{equation}
\frac{\th{\tau}\ut{\tau}}{\wt{\tau}\ut{\wh{\tau}}} = \frac{q-p}{q+p},
\qquad
\frac{\th{\tau}\uh{\tau}}{\wh{\tau}\uh{\wt{\tau}}} = \frac{p-q}{p+q},
\end{equation}
which imply that
\begin{equation}\label{H1tau}
\tau = (-1)^{\flr{n/2}m}\left(\frac{p-q}{p+q}\right)^{\flr{nm/2}},
\end{equation}
up to admissible transformations.

Extended to $\mathbb{Z}^d$, $u$ and $\tau$ are
\begin{equation}\label{utauex}
u = \sum_i n_ip_i, \quad \tau = \prod_{i<j}(-1)^{\flr{n_i/2}n_j}\left(\frac{p_i-p_j}{p_i+p_j}\right)^{\flr{n_in_j/2}},
\end{equation}
where $p_i^2=\alpha_i$ and $i,j\in\{1,\ldots,d\}$.

\section{Integrating the B\"acklund chain}
Successful construction of $\tau$ for a multidimensional solution $u$, integrates the 
B\"acklund chain seeded by $u$ for any equation from the list in Table \ref{list}.
The details follow.

Distinguish lattice and spectral dimensions initially by notation:
suppose on $\mathbb{Z}^{d+N}$ that $u$ and $\tau$ satisfy (\ref{qeq}) and (\ref{taudef}) with translation and parameter pairs denoted by,
\begin{equation}\label{nsplit}
(\oS_1,\alpha_1),\ldots,(\oS_d,\alpha_d),(\oT_1,\lambda_1),\ldots,(\oT_N,\lambda_N).
\end{equation}
Introduce the restriction to $\mathbb{Z}^d$ of a generic function $F$ defined on $\mathbb{Z}^{d+N}$, as
\begin{equation}
\left.F\right\vert_{\mathbb{Z}^d}:=F(n_1,\ldots,n_d,0,\ldots,0).
\end{equation}
The main result concerns functions defined on $\mathbb{Z}^d$,
\begin{equation}\label{ab2}
u_I := \left.\frac{\left(\prod_{i\in I}\left[a_i\oT_i+b_i\oT_i^{-1}\right]\right)u\tau}{\left(\prod_{i\in I}\left[a_i\oT_i+b_i\oT_i^{-1}\right]\right)\tau}\right\vert_{\mathbb{Z}^d}, \quad I\subset\{1,\ldots,N\},
\end{equation}
where $(a_1:b_1),\ldots,(a_N:b_N)\in\mathbb{P}^1$ are freely chosen
and correspond to the integration constants for the B\"acklund chain.
\begin{prop}\label{solprop}
Functions $u_I$ satisfy the quad-equation on $\mathbb{Z}^d$,
\begin{equation}
Q_{\alpha,\beta}(u_I,\wt{u}_I,\wh{u}_I,\th{u}_I)=0,
\end{equation}
they satisfy the B\"acklund relations $u_{I}\xrightarrow{\lambda_i} u_{I\oplus \{i\}}$ in each direction in $\mathbb{Z}^d$,
\begin{equation}\label{bteq}
Q_{\alpha,\lambda_i}(u_I,\wt{u}_I,u_{I\oplus\{i\}},\wt{u}_{I\oplus\{i\}})=0, \quad i\in\{1,\ldots,N\},
\end{equation}
and are related algebraically by superposition,
\begin{equation}\label{speq}
Q_{\lambda_i,\lambda_j}(u_I,u_{I\oplus \{i\}},u_{I\oplus \{j\}},u_{I\oplus\{i,j\}})=0, \quad i,j\in\{1,\ldots,N\}, i\neq j,
\end{equation}
where $\oplus$ indicates symmetric difference, $I\oplus J=(I\cup J) \setminus (I\cap J)$.
\end{prop}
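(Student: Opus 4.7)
The plan is to reduce all three claims to a single base-case B\"acklund identity and then bootstrap via multidimensional consistency of the ABS equation. Since no assertion involves more than two spectral directions at once, it suffices to work in $\mathbb{Z}^{d+2}$ and, by relabelling and induction on $|I|$, to treat $I=\emptyset$; each step $u_I \mapsto u_{I\oplus\{i\}}$ has the same structural form as $u_\emptyset \mapsto u_{\{i\}}$.

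The core step is to prove (\ref{bteq}) for $I=\emptyset$, namely $Q_{\alpha_j,\lambda_i}(u,\wt{u},u_{\{i\}},\wt{u}_{\{i\}}) = 0$ on $\mathbb{Z}^d$. Writing $u_{\{i\}} = P/D$ with $P = a_i \oT_i(u\tau) + b_i \oT_i^{-1}(u\tau)$ and $D = a_i \oT_i\tau + b_i \oT_i^{-1}\tau$ (and the analogous $\wt{P}/\wt{D}$ for $\wt{u}_{\{i\}}$), clearing denominators and expanding by multi-affinity of $Q$ yields contributions of degree $a_i^2$, $b_i^2$, and $a_i b_i$. The $a_i^2$ term factors as $(\oT_i\tau)(\oT_i\wt{\tau}) \cdot Q_{\alpha_j,\lambda_i}(u,\wt{u},\oT_i u,\oT_i \wt{u})$ and vanishes because $u$ already satisfies the quad-equation in the $(\oS_j,\oT_i)$-directions; the $b_i^2$ term vanishes analogously. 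The $a_i b_i$ cross term reduces to
\begin{equation*}
(\oT_i\tau)(\oT_i^{-1}\wt{\tau}) \cdot Q_{\alpha_j,\lambda_i}(u,\wt{u},\oT_i u,\oT_i^{-1}\wt{u}) + (\oT_i^{-1}\tau)(\oT_i\wt{\tau}) \cdot Q_{\alpha_j,\lambda_i}(u,\wt{u},\oT_i^{-1}u,\oT_i\wt{u}),
\end{equation*}
and its vanishing is precisely the content of the second equation of (\ref{taudef}) specialised to $(\alpha,\beta) = (\alpha_j,\lambda_i)$.

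With the base B\"acklund relation in hand, multidimensional consistency of the ABS equation on 3D cubes delivers the remaining claims. The quad-equation for $u_{\{i\}}$ on $\mathbb{Z}^d$ follows by closing the cube in directions $\oS_j,\oS_k,\oT_i$: three B\"acklund faces have been verified, and the fourth must be $Q_{\alpha_j,\alpha_k}=0$ on $u_{\{i\}}$. Superposition (\ref{speq}) for $I=\emptyset$ follows by closing the cube in directions $\oS_k,\oT_i,\oT_j$; consistency determines $u_{\{i,j\}}$ algebraically from $u_\emptyset, u_{\{i\}}, u_{\{j\}}$, and this vertex must coincide with the expression (\ref{ab2}) for $|I|=2$ — a matching rooted in the commutativity of the operators $a_i\oT_i + b_i\oT_i^{-1}$ and $a_j\oT_j + b_j\oT_j^{-1}$.

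The principal obstacle is the $a_i b_i$ cross term in the base case: its cancellation rests on (\ref{taudef}) matching exactly the off-diagonal $Q$-values it produces, a correspondence that is built into the very definition of $\tau$ but needs to be unwound carefully, using both equations of (\ref{taudef}) in concert (the first being needed symmetrically when verifying that the product form (\ref{ab2}) for $|I|=2$ reproduces the vertex forced by superposition).
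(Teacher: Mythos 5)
Your proposal is correct and follows essentially the same route as the paper: the base case $I=\emptyset$ of (\ref{bteq}) is verified by substitution, with your multi-affine expansion making explicit that the $a_ib_i$ cross term cancels precisely by the second equation of (\ref{taudef}) (the $a_i^2$ and $b_i^2$ terms vanishing by (\ref{qeq})), and the remaining claims follow by induction on $|I|$ using the linear contraction $(u\tau,\tau)\mapsto(u_{\{1\}}\tau_{\{1\}},\tau_{\{1\}})$. The paper's own argument is only an outline — it defers singularity handling and the verification that the contracted pair again satisfies (\ref{qeq}) and (\ref{taudef}) to the original references — and your explicit expansion, together with the cube-consistency step yielding the quad-equation and the superposition (\ref{speq}), fleshes out the substitution it summarises.
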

\begin{proof}[Outline of proof]
First, instance $I=\{\}$ of (\ref{bteq}) is confirmed by substitution, recovering the equations for $\tau$,
and in turn, instance $I=\{\}$ of (\ref{speq}) is a consequence that can be confirmed in the same way.
For calculations this is sufficient, because the remaining equations follow by induction on $|I|$. 
Specifically, if $u$ and $\tau$ are known on $\mathbb{Z}^{d+1}$, then the linear contraction of the solution,
\begin{equation}
u_{\{1\}}\tau_{\{1\}} = [a_1\oT_1+b_1\oT^{-1}]u\tau|_{\mathbb{Z}^d}, \quad \tau_{\{1\}} = [a_1\oT_1+b_1\oT_1^{-1}]\tau|_{\mathbb{Z}^d}
\end{equation}
constructs $u_{\{1\}}$ and $\tau_{\{1\}}$ on $\mathbb{Z}^d$, forming the induction principle.
The details, including the handling of certain singularities in the solutions, are in the original papers \cite{AN,AJ}.
\end{proof}

If $u'=(au+b)/(cu+d)$, then $u'$ satisfies an equation in the same class,
so it is meaningful to ask for its tau function, and it is easily checked that $\tau'=(cu+d)\tau$. 
However,
\begin{equation}\label{ab3}
u_I' = \frac{au_I+b}{cu_I+d}= \left.\frac{\left(\prod_{i\in I}\left[a_i\oT_i+b_i\oT_i^{-1}\right]\right)(au+b)\tau}{\left(\prod_{i\in I}\left[a_i\oT_i+b_i\oT_i^{-1}\right]\right)(cu+d)\tau}\right\vert_{\mathbb{Z}^d}.
\end{equation}
I.e., although $\tau$ itself is not invariant, the multisoliton formula is an invariant function of the pair $(u\tau,\tau)$.

\section{Equations for tau}
Any system characterising $\tau$ or $u\tau$ in multidimensions 
must have the linear contraction property mentioned. 
Specifically, given free constants $a$ and $b$, the transform,
\begin{equation}
\tau\rightarrow (a\wt{\tau}+b\ut{\tau})|_{\mathbb{Z}^{d}}
\end{equation}
 produces a new solution on $\mathbb{Z}^{d}$ from a known solution on $\mathbb{Z}^{d+1}$, i.e., excluding equations involving shifts in the distinguished lattice direction.
It is of interest to exhibit examples with this direct integrability feature.
The equations characterising $\tau$ separate the ABS equations.
\begin{prop}\label{poorprop}
The tau function (\ref{taudef}) for each equation in Table \ref{list}, with the exception of Q4 and A2, is characterised on $\mathbb{Z}^2$ by the Hirota KdV equation \cite{hirota-0},
\begin{equation}\label{taurel}
\ut{\uh{\tau}}\wt{\tau}\wh{\tau} + 
\wh{\wt{\tau}}\ut{\tau}\uh{\tau} =
\ut{\wh{\tau}}\wt{\tau}\uh{\tau} +
\wt{\uh{\tau}}\ut{\tau}\wh{\tau}.
\end{equation}
\end{prop}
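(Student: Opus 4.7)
The plan is to derive (\ref{taurel}) from the defining equations (\ref{taudef}). Dividing (\ref{taurel}) by the term $\ut{\wh{\tau}}\wt{\tau}\uh{\tau}$ recasts it as
\begin{equation*}
\frac{\wh{\tau}\ut{\uh{\tau}}}{\uh{\tau}\ut{\wh{\tau}}}
+ \frac{\wh{\wt{\tau}}\ut{\tau}}{\wt{\tau}\ut{\wh{\tau}}}
= 1 + \frac{\wh{\tau}\ut{\uh{\tau}}}{\uh{\tau}\ut{\wh{\tau}}}\cdot\frac{\wt{\uh{\tau}}\ut{\tau}}{\wt{\tau}\ut{\uh{\tau}}}.
\end{equation*}
Each of the four $\tau$-ratios that appears here matches one of the equations in (\ref{taudef}), possibly pulled back by one lattice step: the first ratio on the left is the second equation in (\ref{taudef}) evaluated at $(n-1,m)$; the second on the left is the first equation in (\ref{taudef}) at $(n,m)$; and the right-hand factor on the right is the reciprocal of the first equation in (\ref{taudef}) evaluated at $(n,m-1)$. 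Substituting the corresponding $Q$-ratios from (\ref{taudef}) therefore reduces (\ref{taurel}) to a purely algebraic identity in the nine values $u,\wt{u},\wh{u},\ut{u},\uh{u},\th{u},\wt{\uh{u}},\ut{\wh{u}},\ut{\uh{u}}$, constrained by four instances of the quad-equation (\ref{qeq}) on the quads adjoining the central vertex $u$.

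What remains is to verify that algebraic identity. The cleanest route is via the three-leg form that each of the listed equations admits: on shell, $Q_{\alpha,\beta}(u,\wt{u},\wh{u},\th{u})=0$ rearranges into a relation between functions attached to the single edges incident at $u$, and the common edge-factors then cancel between the numerator and denominator $Q$'s in the four ratios above, collapsing the identity to a trivial equality. Equivalently, one can grind it out by direct expansion using the polynomial forms tabulated in Table \ref{list}.

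The anticipated obstacle is that the calculation is non-uniform across the seven cases: the analytic nature of the three-leg function differs (rational for H1, H2, Q1, Q2; logarithmic/multiplicative for H3, Q3, A1), so no single substitution closes the identity for all equations simultaneously. The exclusions of Q4 and A2 fit the same picture. For Q4 the elliptic three-leg function contributes curve-dependent corrections that are not cancelled by the sign pattern of (\ref{taurel}), and for A2 the multiplicative three-leg form, which mixes $u$ with $1/u$, disrupts the balance between the two sides of (\ref{taurel}); both equations instead satisfy modified bilinear relations, not (\ref{taurel}).
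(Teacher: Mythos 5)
Your proposal addresses only half of the proposition. The statement is that the Hirota KdV equation \emph{characterises} the tau function, which is an equivalence: (i) every $\tau$ defined by (\ref{taudef}) from a solution $u$ of (\ref{qeq}) satisfies (\ref{taurel}); and (ii) conversely, every solution of (\ref{taurel}) on $\mathbb{Z}^2$ arises this way, i.e.\ (\ref{taurel}) is the \emph{only} constraint on $\tau$. Your argument is entirely devoted to (i), and there it is structurally sound: the division of (\ref{taurel}) by $\ut{\wh{\tau}}\wt{\tau}\uh{\tau}$ and the identification of the three resulting $\tau$-ratios with shifted instances of (\ref{taudef}) is correct (I checked the shifts), and reduces the claim to an algebraic identity among the nine values of $u$ on the four quads around the central vertex, modulo four copies of (\ref{qeq}). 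This is exactly what the paper compresses into the sentence ``Substitution confirms that (\ref{taurel}) is a consequence of (\ref{qeq}) and (\ref{taudef})'', so for direction (i) you are on the paper's route, just more explicit about the bookkeeping; you do leave the final identity unverified, deferring to three-leg forms or brute expansion.

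The genuine gap is direction (ii), which is where the paper's proof does all of its real work. The paper fixes an arbitrary $\tau$ on $\mathbb{Z}^2$, rewrites (\ref{taudef}) using $f_{\alpha,\beta}=\partial_zQ_{\alpha,\beta}$, and observes that the resulting relations define the birational maps (\ref{map2}), $(\wt{u},u,\wh{u})\mapsto(u,\ut{u},\ut{\wh{u}})$ and $(\wt{u},u,\wh{u})\mapsto(\uh{\wt{u}},\uh{u},u)$, propagating $u$ across the lattice; the compatibility condition of these two maps is precisely (\ref{taurel}), so any $\tau$ satisfying (\ref{taurel}) admits a compatible $u$ solving (\ref{qeq}) and (\ref{taudef}). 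Without some argument of this kind you have only shown that (\ref{taurel}) is a \emph{necessary} condition on $\tau$, not that it characterises $\tau$ -- a priori the set of tau functions could be cut out by further independent equations. Your closing remarks on why Q4 and A2 are excluded are also speculative rather than demonstrative; in the paper the exclusion is tied to the absence of a constant singular solution (Q4, A2) and to $\tau$ being invariant only under affine, not fractional-linear, point transformations (A2 versus Q3$_{\delta=0}$), and in any case the exclusion must be located in the converse direction, which your proof does not treat.
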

\begin{proof}
Substitution confirms that (\ref{taurel}) is a consequence of (\ref{qeq}) and (\ref{taudef}).
To show that $\tau$ is otherwise free, consider the following transformation.
Let
\begin{equation}
f_{\alpha,\beta}(w,x,y):=\partial_z Q_{\alpha,\beta}(w,x,y,z),
\end{equation} 
and write
\begin{align}
\frac{\th{\tau}\ut{\tau}}{\wt{\tau}\ut{\wh{\tau}}} = \frac
{f_{\alpha,\beta}(u,\wt{u},\wh{u})}
{f_{\alpha,\beta}(u,\ut{u},\wh{u})},
&&
\frac{\th{\tau}\uh{\tau}}{\wh{\tau}\uh{\wt{\tau}}} = \frac
{f_{\alpha,\beta}(u,\wt{u},\wh{u})}
{f_{\alpha,\beta}(u,\wt{u},\uh{u})},
\\
Q_{\alpha,\beta}(u,\ut{u},\wh{u},\ut{\wh{u}})=0,
&&
Q_{\alpha,\beta}(u,\wt{u},\uh{u},\uh{\wt{u}})=0.
\end{align}
If $\tau$ on $\mathbb{Z}^2$ is fixed, then the birational maps respectively defined,
\begin{equation}\label{map2}
(\wt{u},u,\wh{u}) \mapsto (u,\ut{u},\ut{\wh{u}}), \quad
(\wt{u},u,\wh{u}) \mapsto (\uh{\wt{u}},\uh{u},u),
\end{equation}
have (\ref{taurel}) as their compatibility condition, and the emerging function $u$ satisfies equations (\ref{qeq}) and (\ref{taudef}).
\end{proof}
Equation (\ref{taurel}) can be written as
\begin{equation}\label{hkdv}
v+\th{v} = 1/\wt{v} + 1/\wh{v}, \quad v=\frac{\tau\th{\tau}}{\wt{\tau}\wh{\tau}},
\end{equation}
which is equivalent to the form in \cite{hirota-0}.
Bilinear relations are expected \cite{kajota} but only after introducing a non-autonomous scaling factor (gauge factor). 
The most direct statement of the more refined higher dimensional characterisation uses an intermediary Schwarzian variable \cite{James2}.
\begin{prop}\label{sv}
The tau function (\ref{taudef}) of each equation in Table \ref{list}, with the exception of Q4 and A2, is characterised on $\mathbb{Z}^d$ for $d>2$ by the system,
\begin{equation}
\crossratio{\varphi}{\wt{\varphi}}{\wh{\varphi}}{\th{\varphi}}=\frac{t(\beta)-t(\gamma)}{t(\alpha)-t(\gamma)},\qquad 
\varphi=\frac{\wb{\tau}}{\ub{\tau}},
\label{constCR}
\end{equation}
in terms of the lattice Schwarzian KdV equation \cite{NC}, or Q1$_{\delta=0}$, for $\varphi$, where 
\begin{equation}\label{plist}
t(\alpha) = \left\{
\begin{array}{ll}
1/(1-\alpha^2), &(Q3),\\
1/\alpha, &(Q2,Q1,A1),\\
1/\alpha^2, &(H3),\\
\alpha,  &(H2,H1),
\end{array}
\right.
\end{equation}
and $(\wb{\phantom{u}},\gamma)$ denote the shift and parameter in a generic third lattice direction.
\end{prop}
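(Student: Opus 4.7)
The plan is to follow the same two-step template used in the proof of Proposition~\ref{poorprop}: first derive (\ref{constCR}) as a consequence of (\ref{qeq}) and (\ref{taudef}), then conversely show that any $\varphi$ satisfying the cross-ratio equation reconstructs admissible $\tau$ and $u$.

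For the forward direction I would work on the three-dimensional sublattice with directions $(\alpha,\beta,\gamma)$. The identity
\[
\varphi - \wt{\varphi} = \frac{\wb{\tau}\wt{\ub{\tau}} - \wt{\wb{\tau}}\ub{\tau}}{\ub{\tau}\wt{\ub{\tau}}},
\]
together with its analogues for the other three differences entering the cross-ratio, expresses the left-hand side of (\ref{constCR}) as a product of bilinear $\tau$-combinations divided by a symmetric denominator that cancels completely. The Hirota relation (\ref{taurel}) applied in the $(\alpha,\gamma)$ and $(\beta,\gamma)$ planes reduces these bilinear combinations to products involving $f_{\alpha,\gamma}$ and $f_{\beta,\gamma}$ evaluated on $u$, as in the proof of Proposition~\ref{poorprop}. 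Systematic use of (\ref{qeq}) in the three coordinate planes, equivalently the multidimensional consistency of the ABS cube, then eliminates all $u$-dependence, leaving a constant that row-by-row is identified with $(t(\beta)-t(\gamma))/(t(\alpha)-t(\gamma))$ according to (\ref{plist}).

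For the converse, suppose $\varphi$ on $\mathbb{Z}^d$ satisfies (\ref{constCR}). Then $\tau$ is recovered by telescoping $\wb{\tau} = \varphi\cdot\ub{\tau}$ along the $\gamma$-direction, with one integration constant per bar-line absorbed by (\ref{transform1}), and $u$ is recovered on each two-dimensional coordinate sublattice via the birational maps (\ref{map2}) of Proposition~\ref{poorprop}. Multidimensional consistency of the lattice Schwarzian KdV system (\ref{constCR}), together with its compatibility with the ABS cube, guarantees that the local reconstructions in different coordinate planes glue to a global solution.

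The main obstacle is the case-by-case identification of $t(\alpha)$. The unified cross-ratio form (\ref{constCR}) hides the fact that each ABS equation has its own natural spectral parameter, and the explicit polynomial $Q_{\alpha,\beta}$ from Table~\ref{list} must be substituted in each case to determine the correct normalization listed in (\ref{plist}). A secondary delicate point is threading the admissible transformations (\ref{transform1}) and (\ref{transform2}) through the reconstruction, so that the recovered $\tau$ qualifies as a tau function in the strict sense rather than merely modulo those gauge freedoms.
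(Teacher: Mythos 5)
Your forward direction is essentially the paper's (the paper dismisses it with ``substitution confirms''), but the tool you name is off target: the four-point determinants $\wb{\tau}\wt{\ub{\tau}}-\wt{\wb{\tau}}\ub{\tau}$ are not reduced by the six-point Hirota relation (\ref{taurel}); what does the work is the defining system (\ref{taudef}) itself taken in the $(\alpha,\gamma)$ and $(\beta,\gamma)$ planes, which gives the ratios $\wt{\varphi}/\varphi=f_{\alpha,\gamma}(u,\wt{u},\wb{u})/f_{\alpha,\gamma}(u,\wt{u},\ub{u})$ and $\wh{\varphi}/\varphi=f_{\beta,\gamma}(u,\wh{u},\wb{u})/f_{\beta,\gamma}(u,\wh{u},\ub{u})$ directly; the cross-ratio is then a function of these ratios and the remaining $u$-dependence is eliminated by (\ref{qeq}). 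That is a recoverable misstep.

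The converse, however, has a genuine gap. You propose to telescope $\tau$ out of $\varphi$ along the $\gamma$-direction and then rebuild $u$ in each two-dimensional coordinate plane using the maps (\ref{map2}), invoking ``multidimensional consistency'' to glue the planes together. Two things go wrong. First, the compatibility condition of the maps (\ref{map2}) is the Hirota relation (\ref{taurel}), not (\ref{constCR}), and you have not shown that the telescoped $\tau$ satisfies (\ref{taurel}) in each plane. Second, and more fundamentally, the entire content of this proposition beyond Proposition \ref{poorprop} is that the planewise Hirota characterisation is \emph{not} sufficient for $d>2$: the obstruction to gluing the planewise reconstructions is exactly what the refined system (\ref{constCR}) controls, so it cannot be discharged by an appeal to consistency --- that assumes the conclusion. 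The paper avoids reconstructing $\tau$ altogether: with $\varphi$ fixed on $\mathbb{Z}^{d-1}$, the relations (\ref{phidef2})--(\ref{aux2}) define birational maps propagating the three-layer triple $(\ub{u},u,\wb{u})$ in the $\alpha$- and $\beta$-directions, the pairwise compatibility of these maps is precisely the cross-ratio equation (\ref{constCR}), the emerging $u$ on $\mathbb{Z}^{d-1}\times(-1,0,1)$ satisfies (\ref{qeq}) and reproduces $\varphi$ via (\ref{taudef}), and the claim follows by restriction since $d$ was arbitrary. Your proposal is missing this three-layer propagation mechanism, which is the key idea of the proof.
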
\begin{proof}
Substitution confirms that (\ref{constCR}) is a consequence of (\ref{qeq}) and (\ref{taudef}).
To show that $\tau$ on $\mathbb{Z}^d$ is otherwise free, consider a transformation as  follows.
Write
\begin{align}
\frac{\wt{\varphi}}{\varphi} = \frac
{f_{\alpha,\gamma}(u,\wt{u},\wb{u})}
{f_{\alpha,\gamma}(u,\wt{u},\ub{u})},
&&
\frac{\wh{\varphi}}{\varphi} = \frac
{f_{\beta,\gamma}(u,\wh{u},\wb{u})}
{f_{\beta,\gamma}(u,\wh{u},\ub{u})},\label{phidef2}
\\
Q_{\alpha,\gamma}(u,\wt{u},\wb{u},\bt{u})=0, &&Q_{\beta,\gamma}(u,\wh{u},\wb{u},\hb{u})=0,\label{aux1}\\
Q_{\alpha,\gamma}(u,\wt{u},\ub{u},\wt{\ub{u}})=0, &&Q_{\beta,\gamma}(u,\wh{u},\ub{u},\wh{\ub{u}})=0.\label{aux2}
\end{align}
For fixed $\varphi$ on $\mathbb{Z}^{d-1}$ this system determines birational maps,
\begin{equation}\label{map}
(\ub{u},u,\wb{u})\mapsto (\wt{\ub{u}},\wt{u},\bt{u}), \quad (\ub{u},u,\wb{u})\mapsto (\wh{\ub{u}},\wh{u},\hb{u}),
\end{equation}
whose pairwise compatibility condition is (\ref{constCR}) for $\varphi$ on $\mathbb{Z}^{d-1}$, and the emerging function $u$ defined on $\mathbb{Z}^{d-1}\times (-1,0,1)$ satisfies (\ref{qeq}).
Because $d$ was arbitrary the claim follows by restriction to $\mathbb{Z}^{d-1}$.
\end{proof}

For the equations mentioned, the composition $u\mapsto \tau \mapsto u'$ produces a solution $u'$, which is a solution of the same equation, or a different one, and an example will be included at the end of this section.
However, by construction, $\tau'=\tau$ without loss of generality, so nothing more will be gained if the procedure is repeated.
I.e., the quad-equations mentioned in Proposition \ref{sv} have the same tau function.
On the other hand, the statement that two equations have a different tau function does not mean they should be distinguished:
the example A2 is equivalent to Q3$_{\delta=0}$, but as mentioned already, only affine point transformations on $u$ leave $\tau$ unaltered.

Like $\tau$, the quantity $\wb{\tau}/\ub{\tau}$ is also not invariant under fractional-linear point transformations of $u$.
An invariant Schwarzian variable  \cite{Weiss0,James2} is
\begin{equation}\label{sv2}
\varphi:=\frac{\wb{\tau}(\wb{u}-w)}{\ub{\tau}(\ub{u}-w)},
\end{equation}
where $w$ is any given singular solution of the quad equation.
$\varphi$ in (\ref{sv2}) is characterised by equations that are invariant under fractional linear transformations, and satisfies the lattice Schwarzian KP equation \cite{DN,bog}.
If $w$ is constant, then $\varphi$ satisfies the lattice Schwarzian KdV equation, and 
all equations in Table \ref{list} have a singular solution that is constant throughout $\mathbb{Z}^d$, with the exception of $A2$ and $Q4$. 
The canonical forms in Table \ref{list} are such that the constant singular solution is $w=\infty$ in all cases, so substituting this into (\ref{sv2})  offers an explanation for the isolation of $\tau$ leading to Proposition \ref{sv}.

\subsection{Example}
Given the particular tau function (\ref{H1tau}), the associated Schwarzian variable is
\begin{equation}\label{svex}
\varphi =\frac{\wb{\tau}}{\ub{\tau}}= \left(\frac{p-r}{p+r}\right)^n\left(\frac{q-r}{q+r}\right)^m,
\end{equation}
where $(\wb{\phantom{u}},\gamma)$ are the auxiliary shift and parameter, and $r^2=\gamma$ extending (\ref{H1sol}).
Substitution shows that
\begin{equation}\label{crpw}
\crossratio{\varphi}{\wt{\varphi}}{\wh{\varphi}}{\th{\varphi}}=\frac{q^2-r^2}{p^2-r^2},
\end{equation}
confirming (\ref{constCR}).
A solution $u$ whose corresponding Schwarzian variable is (\ref{svex}) was obtained for each of the quad-equations mentioned in Proposition \ref{sv} in \cite{NAH}, and are recalled here as the simplest example for Proposition \ref{sv}.

For H1 and H2 let
\begin{align}
v=np+mq, \qquad \alpha=p^2, \quad \beta=q^2.\label{H1param}
\end{align}
H1 solution:
\begin{equation}\label{H1sym}
u=(A+1/A)(v+B)/2 + (-1)^{n+m}(A-1/A)(v+C)/2.
\end{equation}
H2 solution:
\begin{equation}\label{H2sol}
u=(v+B)^2+(-1)^{n+m}A(2v+C) - A^2.
\end{equation}

For H3 let
\begin{align}
\theta=[\alpha(p-1)]^n[\beta(q-1)]^m, \qquad 1/\alpha^2=1-p^2, \quad 1/\beta^2=1-q^2.\label{H3param}
\end{align}
H3 solution:
\begin{equation}
u = A\theta + B/\theta + (-1)^{n+m}(C\theta + D/\theta), \quad 4(AB-CD)-\delta=0.
\end{equation}

For Q1 and Q2 let
\begin{align}
&w=np\alpha + mq\beta,\qquad
\rho = \left(\frac{p-1}{p+1}\right)^n\left(\frac{q-1}{q+1}\right)^m,\\
&1/\alpha = p^2-1, \quad 1/\beta = q^2-1.\label{Q1param}
\end{align}
Q1 solution:
\begin{equation}
u = A(w+D) + (1+B\rho)(1+C/\rho)/4 , \quad A^2+BC-\delta^2=0.
\end{equation}
Q2 solution:
\begin{equation}
u = (w+A)^2 + (1+B\rho)(1+C/\rho)/4.
\end{equation}

For Q3 let $a,b$ be constants, although only $a/b$ is essential, and 
\begin{align}
&\digamma = \left(\alpha\frac{p-a}{p+b}\right)^n\left(\beta\frac{q-a}{q+b}\right)^m, 
\qquad \rho = \left(\frac{p-a}{p+a}\right)^n\left(\frac{q-a}{q+a}\right)^m,\\
& 1/(1-\alpha^2) = (p^2-a^2)/(b^2-a^2), \quad 1/(1-\beta^2) = (q^2-a^2)/(b^2-a^2).\label{Q3param}
\end{align}
Q3 solution:
\begin{equation}
u=A\digamma+B/\digamma + C\digamma/\rho + D\rho/\digamma, \quad \frac{AB}{(a-b)^2}-\frac{CD}{(a+b)^2}+\frac{\delta^2}{16ab} = 0.
\end{equation}

Where they appear, $A,B,C,D$ are constants, which are either free or subject to the single constraint indicated.
All of these solutions have the same tau function (\ref{H1tau}), and a sufficient number of constants to satisfy initial data for the system of maps (\ref{map}) and (\ref{map2}).
\section{Reduced equations for tau}
The equations for tau will be reduced in order, i.e., reduced to equations on a smaller stencil. This comes at the cost of introducing an auxiliary gauge factor.

Define the polynomial $f_{\alpha,\beta}(w,x,y)=\partial_zQ_{\alpha,\beta}(w,x,y,z)$ as before.
Equations (\ref{taudef}) for $\tau$ are then recast using the assumption that $u$ satisfies the quad-equation,
\begin{equation}
\begin{split}
\frac{\th{\tau}\ut{\tau}}{\wt{\tau}\ut{\wh{\tau}}} = 
\frac
{f_{\alpha,\beta}(u,\wt{u},\wh{u})}
{f_{\alpha,\beta}(u,\ut{u},\wh{u})},
\qquad
\frac{\th{\tau}\uh{\tau}}{\wh{\tau}\uh{\wt{\tau}}} = 
\frac
{f_{\alpha,\beta}(u,\wt{u},\wh{u})}
{f_{\alpha,\beta}(u,\wt{u},\uh{u})},
\\
\frac{\wt{\uh{\tau}}\ut{\tau}}{\wt{\tau}\ut{\uh{\tau}}} = 
\frac
{f_{\alpha,\beta}(u,\wt{u},\uh{u})}
{f_{\alpha,\beta}(u,\ut{u},\uh{u})},
\qquad
\frac{\ut{\wh{\tau}}\uh{\tau}}{\wh{\tau}\ut{\uh{\tau}}} = 
\frac
{f_{\alpha,\beta}(u,\ut{u},\wh{u})}
{f_{\alpha,\beta}(u,\ut{u},\uh{u})}.\label{taudef3}
\end{split}
\end{equation}
The doubling of equations corresponds to possible lattice sites on which values of $u$ are eliminated.
An alternative writing of the same equations is
\begin{equation}\label{splitrel}
\frac{f_{\alpha,\beta}(u,\wt{u},\wh{u})}{v} 
= \ut{v} f_{\alpha,\beta}(u,\ut{u},\wh{u})
= \uh{v} f_{\alpha,\beta}(u,\wt{u},\uh{u})
= \frac{f_{\alpha,\beta}(u,\ut{u},\uh{u})}{\ut{\uh{v}}} = w,
\end{equation}
in terms of the Hirota variable,
\begin{equation}
v = \frac{\tau\th{\tau}}{\wt{\tau}\wh{\tau}}.
\end{equation}
The new function $w$ introduced in (\ref{splitrel}) will facilitate a partial integration of the system. 
Eliminating $v$ the equations (\ref{splitrel}) imply,
\begin{equation}\label{vdef}
v= 
\frac{f_{\alpha,\beta}(u,\wt{u},\wh{u})}{w} = 
\frac{\wt{w}}{f_{\alpha,\beta}(\wt{u},u,\th{u})} =
\frac{\wh{w}}{f_{\alpha,\beta}(\wh{u},\th{u},u)} = 
\frac{f_{\alpha,\beta}(\th{u},\wh{u},\wt{u})}{\th{w}},
\end{equation}
establishing dynamical equations for $w$ on quad edges,
\begin{equation}\label{vsys}
w\wt{w} = d(\alpha,\beta)h_{\alpha}(u,\wt{u}), \quad w\wh{w} = d(\beta,\alpha)h_\beta(u,\wh{u}).
\end{equation}
Simplification from (\ref{vdef}) to (\ref{vsys}) uses the assumption that $u$ satisfies the quad equation, together with the polynomial identity \cite{ABS},
\begin{multline}
[\partial_y Q_{\alpha,\beta}(w,x,y,z)][\partial_z Q_{\alpha,\beta}(w,x,y,z)]\\-Q_{\alpha,\beta}(w,x,y,z)[\partial_y\partial_z Q_{\alpha,\beta}(w,x,y,z)]
= d(\alpha,\beta)h_\alpha(w,x),\label{gdiscrim}
\end{multline}
for a constant $d(\alpha,\beta)=-d(\beta,\alpha)$ and a biquadratic polynomial $h_\alpha(w,x)$, see Table \ref{list2} for the list of constants.

\begin{table}
\begin{center}
\begin{tabular}{l|l}
\hline
&  $d(\alpha,\beta)$ \\
\hline
Q4& $2\sn(\alpha)\sn(\beta)\sn(\alpha-\beta)$\\
Q3& $(1-1/\alpha^2)(1-1/\beta^2)(\alpha^2-\beta^2)$\\
Q2& $4\alpha\beta(\alpha-\beta)$ \\
Q1& $2\alpha\beta(\alpha-\beta)$\\ 
A2& $(1-1/\beta^2)(1-1/\alpha^2)(\beta^2-\alpha^2)$\\
A1& $2\beta\alpha(\beta-\alpha)$ \\
H3& $\beta^2-\alpha^2$ \\
H2& $\beta-\alpha$ \\
H1& $\beta-\alpha$ \\
\hline
\end{tabular}
\caption{Skew-symmetric constants}
\label{list2}
\end{center}
\end{table}

System (\ref{vsys}) and the dimensional skew-symmetry of (\ref{splitrel}) indicates the function $w$ has a decomposition in potentials $U$ and $\mu$,
\begin{equation}
w=U\frac{\ut{\mu}\wh{\mu}}{\mu\wh{\ut{\mu}}}.
%w=U\frac{\mu\wh{\ut{\mu}}}{\ut{\mu}\wh{\mu}}.
\end{equation}
First, $U$ satisfies the multidimensional system \cite{Adl,ABS},
\begin{equation}\label{Ueq}
U\wt{U} = h_\alpha(u,\wt{u}), \qquad U\wh{U} = h_\beta(u,\wh{u}).
\end{equation}
There is an interpretation $U=\partial_xu$ in terms of dependence of $u$ on an independent variable $x$, except for H1 and H3$_{\delta=0}$.
Second, $\mu$ is a gauge factor defined by,
\begin{equation}
\frac{\th{\mu}\ut{\mu}}{\wt{\mu}\ut{\wh{\mu}}} = {d(\alpha,\beta)},
\qquad
\frac{\th{\mu}\uh{\mu}}{\wh{\mu}\uh{\wt{\mu}}} = {d(\beta,\alpha)}. \label{mudef}
\end{equation}
This is an autonomous multidimensional system with the same form as the original system (\ref{taudef}) for $\tau$, depending on constants in Table \ref{list2}.

In summary, the reduced equations for $\tau$ (\ref{splitrel}) have the following meaning.
\begin{prop}
Suppose $u$ satisfies a quad-equation in Table \ref{list}. If $\tau$ is the associated tau function and $\mu$ satisfies (\ref{mudef}), then
\begin{equation}\label{Uassign}
 U = \frac{\wt{\sigma}\wh{\sigma}}{\sigma\th{\sigma}}\frac{f_{\alpha,\beta}(u,\wt{u},\wh{u})}{d(\alpha,\beta)}, \quad \sigma = \tau/\mu,
\end{equation}
satisfies (\ref{Ueq}).
Conversely, if $U$ satisfies (\ref{Ueq}), then $\tau$ is determined by (\ref{Uassign}).
\end{prop}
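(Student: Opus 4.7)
The plan is to reduce the proposition to the material already assembled in Section 4; both directions will follow from a single algebraic rearrangement of (\ref{Uassign}).

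First I would rewrite (\ref{Uassign}) in terms of the Hirota variable $v$ and the auxiliary $w$ from (\ref{vdef}). Using $\sigma=\tau/\mu$, the ratio $\wt{\sigma}\wh{\sigma}/(\sigma\th{\sigma})$ factorises as $v^{-1}\cdot\mu\th{\mu}/(\wt{\mu}\wh{\mu})$, and combining this with $v=f_{\alpha,\beta}(u,\wt{u},\wh{u})/w$ from (\ref{vdef}) identifies (\ref{Uassign}) with the clean equivalent
\[
w \;=\; \frac{\wt{\mu}\wh{\mu}}{\mu\th{\mu}}\,U\,d(\alpha,\beta),
\]
which is the backbone of both implications.

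For the forward direction I would form $w\wt{w}$ from this identity and compare with the first relation in (\ref{vsys}). Using $\wt{\wh{\mu}}=\th{\mu}$, the surviving $\mu$-ratio in $w\wt{w}$ matches, up to inversion, the $\wt{\phantom{u}}$-translate of the first relation of (\ref{mudef}), collapsing one factor of $d(\alpha,\beta)^2$ down to $d(\alpha,\beta)$ and leaving $U\wt{U}=h_\alpha(u,\wt{u})$. The parallel calculation with $w\wh{w}$, using the $\wh{\phantom{u}}$-translate of the second relation of (\ref{mudef}) together with $d(\beta,\alpha)=-d(\alpha,\beta)$, yields $U\wh{U}=h_\beta(u,\wh{u})$. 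Since (\ref{vsys}) has itself been derived from (\ref{taudef}) via the polynomial identity (\ref{gdiscrim}), no further input is needed.

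The converse is a reversal of exactly this step. Starting from $U$ solving (\ref{Ueq}) and $\mu$ solving (\ref{mudef}), the same identity defines $w$, and $w\wt{w},w\wh{w}$ reproduce (\ref{vsys}). Then $v:=f_{\alpha,\beta}(u,\wt{u},\wh{u})/w$ is consistent with all four expressions in (\ref{splitrel}), and integration of $\tau\th{\tau}/(\wt{\tau}\wh{\tau})=v$ recovers a $\tau$ satisfying (\ref{taudef}), uniquely up to the gauges (\ref{transform1})--(\ref{transform2})---the precise sense in which (\ref{Uassign}) ``determines'' $\tau$. The main obstacle is purely the shift bookkeeping: confirming that the two $\mu$-ratios in $w\wt{w}$ and $w\wh{w}$ collapse exactly to $1/d(\alpha,\beta)$ and $1/d(\beta,\alpha)$ under a single translation of each relation in (\ref{mudef}). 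Beyond that, the argument is a reorganisation of identities already present in (\ref{splitrel})--(\ref{mudef}).
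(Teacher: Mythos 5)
Your proposal is correct and follows essentially the same route as the paper, which states this proposition as a summary of the derivation (\ref{splitrel})--(\ref{mudef}) immediately preceding it; your identity $w=\wt{\mu}\wh{\mu}\,U\,d(\alpha,\beta)/(\mu\th{\mu})$ is exactly the paper's decomposition $w=U\ut{\mu}\wh{\mu}/(\mu\wh{\ut{\mu}})$ rewritten using (\ref{mudef}), and the shift bookkeeping you flag does collapse as claimed (a single $\wt{\phantom{u}}$-, resp.\ $\wh{\phantom{u}}$-, translate of the relations (\ref{mudef}) gives the needed factors $1/d(\alpha,\beta)$ and $1/d(\beta,\alpha)$).
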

The quad-equation for $\sigma$ can be an advantage, as it is an alternative to embedding a quad-graph into $\mathbb{Z}^d$ allowing an initial value problem to be considered directly \cite{AdVeQ,pip}.

The reduced equations for $\tau$ provide a connection between Proposition \ref{sv} and the Sato framework via the well known equations of Miwa \cite{Miwa}.
\begin{prop}\label{bilinear}
The tau function (\ref{taudef}) of each equation in Table \ref{list}, with the exception of Q4 and A2, is a function on $\mathbb{Z}^d$ for $d>2$ of the form $\tau=\sigma\nu$, where $\sigma$ on $\mathbb{Z}^d$ satisfies the following reduction of the Miwa equation,
\begin{equation}\label{miwa}
\left[
\begin{array}{cccc}
0 & t(\alpha)-t(\beta) & t(\beta)-t(\gamma) & t(\gamma)-t(\alpha) \\
t(\alpha)-t(\beta) & 0 & -1 & 1\\
t(\beta)-t(\gamma) & 1 & 0 & -1\\
t(\gamma)-t(\alpha) & -1 & 1 & 0
\end{array}
\right]
\left[
\begin{array}{l}
\sigma\th{\wb{\sigma}}\\ 
\wb{\sigma}\th{\sigma}\\
\wt{\sigma}\hb{\sigma}\\
\wh{\sigma}\bt{\sigma}
\end{array}
\right]
=
\left[
\begin{array}{l}
0\\0\\0\\0
\end{array}
\right],
\end{equation}
and 
$\nu$ on $\mathbb{Z}^d$ is a solution of the multiplicatively linear system 
\begin{equation}\label{nudef}
\frac{\th{\nu}\ut{\nu}}{\wt{\nu}\ut{\wh{\nu}}} = {t(\beta)-t(\alpha)},
\qquad
\frac{\th{\nu}\uh{\nu}}{\wh{\nu}\uh{\wt{\nu}}} = {t(\alpha)-t(\beta)},
\end{equation}
with $t$ as defined in (\ref{plist}).
\end{prop}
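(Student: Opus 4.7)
The plan is to recognize the proposition as a bilinearization of the cross-ratio characterization in Proposition \ref{sv}, with $\nu$ serving as the constant-coefficient gauge factor needed to convert the additive Schwarzian equation into the multiplicative Hirota--Miwa form. I would first establish that $\nu$ exists on $\mathbb{Z}^d$: the system (\ref{nudef}), extended to every pair of lattice directions with constants $t(\alpha_j) - t(\alpha_i)$, is an autonomous instance of the tau-system (\ref{taudef}) with a trivial H1-type underlying solution, so its multidimensional consistency follows by the same reasoning that underlies Proposition \ref{solprop}.

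For the forward direction, I would set $\sigma = \tau/\nu$ and substitute $\varphi = \wb{\tau}/\ub{\tau} = (\wb{\sigma}\wb{\nu})/(\ub{\sigma}\ub{\nu})$ into the cross-ratio relation (\ref{constCR}). After clearing denominators, the cross-ratio is a degree-four rational identity among $\tau$ at the eight corners of a $1\times 1\times 2$ stencil spanning two elementary cubes stacked in direction $\wb{\phantom{u}}$. Applying the direction-$(i,3)$ instances of (\ref{nudef}) to absorb each $\nu$-ratio into its associated constant $t(\alpha_j) - t(\alpha_i)$, the identity should collapse to a multilinear relation on $\sigma$ at the eight corners of a single elementary cube. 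Row one of (\ref{miwa}) then emerges as the Hirota--Miwa relation among the three monomials $\wb{\sigma}\th{\sigma}$, $\wt{\sigma}\hb{\sigma}$, $\wh{\sigma}\bt{\sigma}$. Rows two through four arise from cyclic permutation of $(\alpha,\beta,\gamma)$ in the same argument, i.e.\ by applying the cross-ratio in the direction pairs $(\wt,\wb)$ and $(\wh,\wb)$ rather than $(\wt,\wh)$. A brief linear-algebra check confirms the rank structure of the matrix: rows two through four sum to the zero row, and row one lies in their span, so (\ref{miwa}) encodes exactly two independent bilinear relations.

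The converse direction mirrors the strategy in Propositions \ref{poorprop} and \ref{sv}: the matrix system (\ref{miwa}), paired with (\ref{nudef}), defines birational maps that evolve $\sigma$ from Cauchy data on two-dimensional initial surfaces, with compatibility conditions equivalent to the Schwarzian cross-ratio for $\varphi = \wb{\tau}/\ub{\tau}$ and hence to (\ref{qeq}) and (\ref{taudef}) for $\tau = \sigma\nu$ and the associated $u$. The main obstacle is the bookkeeping in the forward calculation: the cross-ratio has a degree-four rational structure while (\ref{miwa}) is degree-two bilinear, so one must track which $\nu$-ratio is absorbed by which instance of (\ref{nudef}), and exploit the $S_3$-symmetry in $(\alpha,\beta,\gamma)$ to recover all four rows of (\ref{miwa}) uniformly rather than re-deriving each separately.
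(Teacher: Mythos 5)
Your overall architecture (existence of $\nu$, the rank-two structure of the matrix, reduction of the converse to Proposition \ref{sv}) is sound, and your linear-algebra check of the rank is correct. But the central step of your forward direction has a genuine gap. You start from the cross-ratio relation (\ref{constCR}), substitute $\varphi=\wb{\tau}/\ub{\tau}$, and assert that after clearing denominators the identity ``should collapse'' to a bilinear relation on a single elementary cube. It will not collapse on its own: the cross-ratio is a ratio of \emph{products} of the candidate bilinear combinations (e.g.\ $\varphi-\wt{\varphi}=(\wb{\tau}\wt{\ub{\tau}}-\wt{\wb{\tau}}\ub{\tau})/(\ub{\tau}\wt{\ub{\tau}})$, living on two cubes stacked in the $\wb{\phantom{u}}$ direction), and the statement that such a ratio equals a constant does not imply that any individual trilinear/bilinear combination on one cube vanishes. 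Passing from the multiplicative two-cube relation to the additive one-cube relation is exactly the bilinearization (integration) problem that the proposition solves, and it requires either an explicit integrating-factor argument pinning down the gauge, or an independent derivation of the single-cube identities. Your appeal to ``absorbing $\nu$-ratios'' fixes constants but does not supply this step. Relatedly, your claim that rows two through four arise by cyclic permutation of $(\alpha,\beta,\gamma)$ cannot be right as stated: cyclic permutation of row one reproduces combinations of the same three monomials $\wb{\sigma}\th{\sigma}$, $\wt{\sigma}\hb{\sigma}$, $\wh{\sigma}\bt{\sigma}$, whereas rows two through four each involve the fourth diagonal monomial $\sigma\th{\wb{\sigma}}$ with nonzero coefficient.

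The paper closes this gap by a different and more direct route: it never manipulates the cross-ratio. Instead it uses the additive identities satisfied by the H1 polynomial $f_{\alpha,\beta}(u,\wt{u},\wh{u})+f_{\beta,\gamma}(u,\wh{u},\wb{u})+f_{\gamma,\alpha}(u,\wb{u},\wt{u})=0$ (and its three companions obtained by reflecting corners of the cube, one per main diagonal — this is where the four rows come from). Combined with the reduced tau equations (\ref{splitrel}), which express each $f$ as a known multiplier times a ratio $\tau\th{\tau}/(\wt{\tau}\wh{\tau})$ of tau values, these additive identities convert \emph{directly} into the bilinear relations (\ref{sigeq}) on a single cube; the result then transfers to every equation in the list (except Q4, A2) because Proposition \ref{sv} shows they share the same tau function. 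The converse is handled as you suggest, by checking that (\ref{sigeq}) together with the defining properties of $\nu$ implies (\ref{constCR}) and then invoking Proposition \ref{sv}. If you want to salvage your approach, you would need to prove the one-cube identities separately rather than hoping they fall out of the two-cube cross-ratio relation; the $f$-identities are the mechanism that does this.
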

\begin{proof}
The corresponding condition on $\tau$ is,
\begin{equation}\label{sigeq}
\left[
\begin{array}{cccc}
0&\wt{a}&\wh{b}&\wb{c}\\
\wt{a} & 0 & c & \hb{b} \\
\wh{b} & \bt{c} & 0 & a \\
\wb{c} & b & \th{a} & 0
\end{array}
\right]
\left[
\begin{array}{l}
\tau\th{\wb{\tau}}\\ 
\wb{\tau}\th{\tau}\\
\wt{\tau}\hb{\tau}\\
\wh{\tau}\bt{\tau}
\end{array}
\right]
=
\left[
\begin{array}{l}
0\\0\\0\\0
\end{array}
\right],
\end{equation}
where the rank-two matrix has coefficients in terms of
\begin{equation}\label{abc}
a = \frac{\nu\th{\nu}}{\wt{\nu}{\wh{\nu}}},\quad
b = \frac{\nu\hb{\nu}}{\wh{\nu}{\wb{\nu}}},\quad
c = \frac{\nu\bt{\nu}}{\wb{\nu}{\wt{\nu}}}.
\end{equation}
The necessity of (\ref{sigeq}) can be checked by glancing at Table \ref{list}, where the H1 polynomial satisfies identities,
\begin{equation}
\begin{split}
f_{\alpha,\beta}(u,\wt{u},\wh{u}) + f_{\beta,\gamma}(u,\wh{u},\wb{u}) + f_{\gamma,\alpha}(u,\wb{u},\wt{u}) = 0,\\
f_{\alpha,\beta}(u,\wt{u},\wh{u}) + f_{\beta,\gamma}(u,\wh{u},\ub{u}) + f_{\gamma,\alpha}(u,\ub{u},\wt{u}) = 0,
\end{split}
\end{equation}
and so on.
Combined with (\ref{splitrel}) they convert directly to the bilinear identities for $\tau$, so, according
to Proposition \ref{sv}, must hold for the common tau function of all mentioned equations.
That relations (\ref{sigeq}) imply (\ref{constCR}) can be checked by calculation using the defining properties of $\nu$, in particular that $\th{a}=-a$, $\wt{a}=-\wh{a}$, $a\wt{a}=t(\beta)-t(\alpha)$, $\wb{a}=a$, and so on.
\end{proof}
In particular, given any tau function for the mentioned equations, there is a unique gauge factor, $\nu$ satisfying (\ref{nudef}), such that $\sigma = \tau/\nu$ satisfies (\ref{miwa}).
Furthermore, linear contraction on (\ref{sigeq}) acts trivially on $\nu$, so descends to an action on $\sigma$ as follows (constants a and b are absourbed into $\nu$).
\begin{prop} 
If $\sigma$ satisfies (\ref{miwa}) on $\mathbb{Z}^{d+1}$, then
\begin{equation}
\sigma' = \left.\frac{\wt{\nu}\wt{\sigma} + \ut{\nu}\ut{\sigma}}{\nu}\right|_{\mathbb{Z}^d}
\end{equation}
satisfies (\ref{miwa}) on $\mathbb{Z}^d$, where $\nu$ on $\mathbb{Z}^{d+1}$ is any solution of (\ref{nudef}).
\end{prop}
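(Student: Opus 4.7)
The plan is to reduce this contraction to the linear contraction property of the full tau function from Section 3, applied to the factorization $\tau = \sigma\nu$ provided by Proposition \ref{bilinear}. Given $\sigma$ solving (\ref{miwa}) and $\nu$ solving (\ref{nudef}) on $\mathbb{Z}^{d+1}$, the product $\tau := \sigma\nu$ satisfies the bilinear identity (\ref{sigeq}) by the proof of Proposition \ref{bilinear}, and is therefore (via Proposition \ref{sv}) a valid tau function of one of the ABS quad-equations on $\mathbb{Z}^{d+1}$. The direct integrability property from the beginning of Section 3 then supplies, for free constants $a, b$, a new tau function on $\mathbb{Z}^d$,
\[
\tau' := (a\wt\tau + b\ut\tau)\vert_{\mathbb{Z}^d} = (a\wt\nu\wt\sigma + b\ut\nu\ut\sigma)\vert_{\mathbb{Z}^d}.
\]

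Next, because the equations (\ref{nudef}) involve no shifts in the contracted direction, the restriction $\nu\vert_{\mathbb{Z}^d}$ remains a solution of (\ref{nudef}) on $\mathbb{Z}^d$. Applying Proposition \ref{bilinear} to $\tau'$ with this choice of gauge factor yields $\sigma' := \tau'/\nu\vert_{\mathbb{Z}^d}$ satisfying (\ref{miwa}) on $\mathbb{Z}^d$, which already recovers the claimed formula up to the constants $a, b$.

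To normalize $a = b = 1$ I would invoke the scalar symmetry $\nu \to F(n_{d+1})\nu$ of (\ref{nudef}), where $n_{d+1}$ denotes the contracted coordinate. Choosing $F(0) = 1$, $F(1) = 1/a$, $F(-1) = 1/b$ leaves $\nu\vert_{\mathbb{Z}^d}$ unchanged while rescaling $\wt\nu\vert_{\mathbb{Z}^d}$ by $1/a$ and $\ut\nu\vert_{\mathbb{Z}^d}$ by $1/b$, which is exactly what is needed. The compensating rescaling $\sigma \to \sigma/F(n_{d+1})$ preserves (\ref{miwa}) by a direct check that each bilinear term in (\ref{miwa}) acquires the same common factor, regardless of whether $n_{d+1}$ coincides with one of the three active directions.

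The main obstacle is this final gauge step: verifying cleanly that the $(a,b)$-dependence absorbs into $\nu$, as asserted informally in the paragraph preceding the proposition. One must confirm simultaneously that the rescaling is a symmetry of both (\ref{miwa}) and (\ref{nudef}) and that the bookkeeping between the original and rescaled $(\sigma,\nu)$ pairs is consistent with the restriction $\vert_{\mathbb{Z}^d}$. The remainder is routine, given the invertibility of the correspondence $(\sigma,\nu) \leftrightarrow \tau = \sigma\nu$ established in Proposition \ref{bilinear}.
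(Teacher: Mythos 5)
Your proposal follows essentially the same route as the paper, whose entire justification for this proposition is the preceding remark that the linear contraction on (\ref{sigeq}) acts trivially on $\nu$ and that the constants $a,b$ are absorbed into $\nu$; your normalisation via the symmetry $\nu\to F(n_{d+1})\nu$, $\sigma\to\sigma/F(n_{d+1})$ is a correct and welcome expansion of that parenthetical, since each of the four bilinear products in (\ref{miwa}) acquires a common factor. One step, however, is weaker than you present it. Proposition \ref{bilinear} asserts that for a given tau function there is a \emph{unique} gauge factor $\nu$ with $\tau/\nu$ satisfying (\ref{miwa}); since (\ref{nudef}) admits a whole family of solutions generated by the symmetries (\ref{transform1}) and (\ref{transform2}), the mere fact that $\nu|_{\mathbb{Z}^d}$ solves (\ref{nudef}) does not by itself license the conclusion that $\sigma'=\tau'/\nu|_{\mathbb{Z}^d}$ satisfies (\ref{miwa}) — you must still check that the restricted $\nu$ is the gauge factor belonging to the contracted $\tau'$. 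That check is precisely the content of the paper's phrase ``linear contraction on (\ref{sigeq}) acts trivially on $\nu$'': the coefficients $a,b,c$ of (\ref{abc}) and their shifts appearing in (\ref{sigeq}) involve no shift in the contracted direction, and one verifies that $\tau'=a\wt{\tau}+b\ut{\tau}|_{\mathbb{Z}^d}$ satisfies (\ref{sigeq}) with those same restricted coefficients, rather than with coefficients built from some other solution of (\ref{nudef}). Making that verification explicit would close the argument; as written, your appeal to Proposition \ref{bilinear} at that point assumes what needs to be shown.
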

The linear contraction on $\tau$ has the advantage of being cleanly separated from the symmetries (\ref{transform1}) and (\ref{transform2}) that alter $\nu$, but in this form on $\sigma$ it can be recognised as a B\"acklund transformation for the Miwa equation that is compatible with the reduction.

\subsection{Discussion}
The relationship between the tau function and bilinear forms proposed for the ABS equations \cite{HZ,WZM} has not been investigated, but they are traditionally very close ideas.

Characterisation of $\tau$ for A2 and Q4 is qualitatively different, as in these cases there is no effective difference between the characterisation of $\tau$ and the characterisation of $u\tau$.
It is therefore reasonable to be interested in the isolation of $u\tau$ even for the simplest equations.

\subsection{Example}
A solution of the Miwa equation reduction (\ref{miwa}) is
\begin{equation}
\sigma = \prod_{i<j}(p_i+p_j)^{-n_in_j}, \qquad t(\alpha_i) = p_i^2, \quad i,j\in\{1,\ldots,d\}.
\end{equation}
This solution is related to the ongoing 
tau function example (\ref{utauex}),
\begin{equation}
\tau = \prod_{i<j}(-1)^{\flr{n_i/2}n_j}\left(\frac{p_i-p_j}{p_i+p_j}\right)^{\flr{n_in_j/2}},
\end{equation}
because $\nu=\tau/\sigma$ satisfies system (\ref{nudef}) with $t(\alpha_i) = p_i^2$.

\section{Extended example: A2, F$_I$ and Q4$^*$ solutions}
Solutions of A2 inherent to the equation rather than its connection to Q3$_{\delta=0}$ are considered.
Corresponding solutions of $F_I$ and $Q4^*$ are immediate due to known transformations,
which are motivation to consider the A2 equation in its own right.

Different polynomials from Table \ref{list} will be used simultaneously, so define
\begin{multline}
R_{p,q}(w,x,y,z):=\\
(p-1/p)(wy+xz) -(q-1/q)(wx+yz)-(p/q-q/p)(1+wxyz),\label{A2}
\end{multline}
which is the A2 polynomial with a relabelling $\alpha\rightarrow p$ and $\beta\rightarrow q$.
\subsection{Seed}\label{ss}
A seed solution can be found as the simplest periodic reduction \cite{weiss}, a solution mapped to itself by the B\"acklund transformation.
Specifically,
\begin{equation}
R_{p,k}(u,\wt{u},u,\wt{u}) = 0, \quad
R_{q,k}(u,\wh{u},u,\wh{u}) = 0,
\end{equation}
where $k$ is a free parameter.
The solution is
\begin{equation}\label{seed}
u=\sqrt{k}\sn(\xi)=\frac{\ETA(\xi)}{\THE(\xi)}, \quad \xi=\xi_0+n\alpha +m\beta,
\end{equation}
where
\begin{equation}\label{eparams}
p=k\frac{\cn(\alpha)}{\dn(\alpha)}, \quad q=k\frac{\cn(\beta)}{\dn(\beta)}.
\end{equation}
The Jacobi modulus $k$ appearing in the solution is the B\"acklund parameter in the stationary direction.
Equality
\begin{equation}\label{rsol}
R_{p,q}(u,\wt{u},\wh{u},\th{u})=0,
\end{equation}
can be checked directly, but the following procedure was developed in \cite{AN} and also works for degenerations from Q4.

The Q4 polynomial in Jacobi form \cite{Hie} is taken directly from Table \ref{list},
\begin{multline}
Q_{\alpha,\beta}(w,x,y,z)=\sn(\alpha)(wx+yz)-\sn(\beta)(wy+xz)\\
-\sn(\alpha-\beta)[xy+wz-k\sn(\alpha)\sn(\beta)(1+wxyz)].
\label{Q4}
\end{multline}
An identity can be confirmed,
\begin{multline}\label{ident}
R_{p,q}(w,x,y,z) = \frac{1-k^2}{2\sqrt{k}}\frac{1-k^2\sn^2(\alpha)\sn^2(\beta)}{\sn(\alpha)\sn'(\alpha)\sn(\beta)\sn'(\beta)}\times \\\left[\sn(\beta+\alpha)Q_{\alpha,\beta}(w,x,y,z)+\sn(\beta-\alpha)Q_{\alpha,-\beta}(w,x,y,z)\right],
\end{multline}
where $p,q$ are related to $\alpha,\beta$ by (\ref{eparams}).
Because (\ref{seed}) is the singular solution of Q4 \cite{ABS2}, both terms in (\ref{ident}) vanish on it, confirming (\ref{rsol}).

\subsection{Tau function}
Identity (\ref{ident}) is also the first step to integrating the equations (\ref{taudef}) for $\tau$ in this example.
It yields 
\begin{equation}\label{taudef2}
\frac{\th{\tau}\ut{\tau}}{\wt{\tau}\ut{\wh{\tau}}} 
=-\frac
{R_{p,q}(u,\wt{u},\wh{u},\wh{\ut{u}})}
{R_{p,q}(u,\ut{u},\wh{u},{\th{u}})}\\
=-
\frac
{Q_{\alpha,-\beta}(u,\wt{u},\wh{u},\wh{\ut{u}})}
{Q_{\alpha,\beta}(u,\ut{u},\wh{u},{\th{u}})}
\frac{\sn(\beta-\alpha)}{\sn(\beta+\alpha)}.
\end{equation}
Here the properties of (\ref{seed}) as the Q4 singular solution are used: 
\begin{equation}\label{p0}
Q_{\alpha,\beta}(u,\wt{u},\wh{u},z)=0, \quad Q_{\alpha,-\beta}(u,z,\wh{u},\th{u})=0,
\end{equation}
hold identically as polynomials in $z$.

Indeed, generalising this,
\begin{equation}
Q_{\alpha,\beta}(u,\wt{u},y,z)=\sn(\alpha)t(\xi,\beta,\alpha-\beta)[\wh{u}-y][\wt{\uh{u}}-z],\label{id1}
\end{equation}
identically as a polynomial in $y$ and $z$, where 
\begin{multline}
t(a,b,c):= 1+k^2\sn(a)\sn(b)\sn(c)\sn(a+b+c)\\ = \frac{\THE(0)\THE(a+b)\THE(b+c)\THE(c+a)}{\THE(a)\THE(b)\THE(c)\THE(a+b+c)}.\label{tdef}
\end{multline}
The second identity in (\ref{p0}) may also be obtained from (\ref{id1}), by using the change $\beta\rightarrow -\beta$ and the quad symmetry of the polynomial.
Similarly, (\ref{id1}) can also be cast into each of the following two forms:
\begin{equation}
\begin{split}
Q_{\alpha,-\beta}(u,\wt{u},y,z)&=\sn(\alpha)t(\xi,-\beta,\alpha+\beta)[\uh{u}-y][\th{u}-z],\\
Q_{\alpha,\beta}(u,\ut{u},y,z)&=\sn(\alpha)t(\ut{\xi},\beta,\alpha-\beta)[\uh{u}-y][\ut{\wh{u}}-z].
\end{split}
\end{equation}
Direct application of these formulae reduces (\ref{taudef2}) to
\begin{equation}\label{final-tau}
\frac{\th{\tau}\ut{\tau}}{\wt{\tau}\ut{\wh{\tau}}} =
\frac{\THE(\th{\xi})\THE(\ut{\xi})\ETA(\beta-\alpha)}
{\THE(\wt{\xi})\THE(\wh{\ut{\xi}})\ETA(\beta+\alpha)}.
\end{equation}
The system (\ref{final-tau}), coupled with its counterpart obtained by interchanging $(\wt{\phantom{u}},\alpha)$ and $(\wh{\phantom{u}},\beta)$, integrates to
\begin{equation}\label{mdt}
\tau = (-1)^{\flr{n/2}m}\THE(\xi)\left(\frac{\ETA(\alpha-\beta)}{\ETA(\alpha+\beta)}\right)^\flr{nm/2},
\end{equation}
up to the standard transformations.

\subsection{Multidimensional extension}
The multidimensional seed and tau function are given by obvious extensions
\begin{equation}\label{mmdt}
u = \frac{\ETA(\xi)}{\THE(\xi)}, \quad \tau = \THE(\xi)\prod_{1\le i< j \le d}(-1)^{\flr{n_i/2}n_j}\left(\frac{\ETA(\alpha_i-\alpha_j)}{\ETA(\alpha_i+\alpha_j)}\right)^\flr{n_in_j/2},
\end{equation}
where now $\xi = \xi_0+n_1\alpha_1+\cdots +n_d\alpha_d$.

Splitting the lattice notationally as in (\ref{nsplit}), the integrated B\"acklund chain will be expressed as $u_{\{1,\ldots,N\}}=g/f$ where
\begin{equation}
f = \prod_{i\in\{1,\ldots,N\}}\left.\left[a_i\oT_i+b_i\oT_i^{-1}\right]\tau\right|_{\mathbb{Z}^d}, \quad 
g = \prod_{i\in\{1,\ldots,N\}}\left.\left[a_i\oT_i+b_i\oT_i^{-1}\right]u\tau\right|_{\mathbb{Z}^d}.
\end{equation}

\subsection{$F_I$ solution}
The consistent system on $\mathbb{Z}^d$, where indeterminates are on lattice edges,
\begin{equation}\label{F1}
x\wh{x}q^2=y\wt{y}p^2, \quad (1-x)(1-\wh{x})(1-q^2)=(1-y)(1-\wt{y})(1-p^2),
\end{equation}
is the quadrirational map $F_I$ \cite{ABSf}.
For any solution $u$ of equation $A2$ (\ref{A2}), the association
\begin{equation}\label{subs}
x=pu\wt{u}, \quad y=qu\wh{u},
\end{equation}
defines a solution of (\ref{F1}) \cite{PTV}.
This is verified by substitution of (\ref{subs}); the first equation in (\ref{F1}) is satisfied identically, and the second recovers (\ref{A2}).

The multidimensional solution of (\ref{F1}) obtained from the solution given for (\ref{A2}) therefore takes the form
\begin{equation}
\begin{split}
x = k\frac{\cn(\alpha)g\wt{g}}{\dn(\alpha)f\wt{f}},\\
y = k\frac{\cn(\beta)g\wh{g}}{\dn(\beta)f\wh{f}},
\end{split}
\end{equation}
where the parameters $p,q$ appearing in the equation are related to $\alpha$ and $\beta$ by (\ref{eparams}).
\subsection{$Q4^*$ solution}
Consider now systems (\ref{A2}) and (\ref{F1}) in at least $\mathbb{Z}^d$ with $d\ge 3$, and complement (\ref{subs}) with an equation in the third direction,
\begin{equation}
z=ru\wb{u}.
\end{equation}
It was shown in \cite{ib} that $z$ satisfies a multi-quadratic quad-equation of the class defined in \cite{AtkNie}, say,
\begin{equation}\label{gf}
S_{p^2,q^2}(z,\wt{z},\wh{z},\th{z})=0,
\end{equation}
which can be viewed as superposition principle for B\"acklund transformations of (\ref{F1}).
The multi-quadratic polynomial $S$ is labelled by values $(0,1,\infty,r^2)$, which are branch points of an underlying elliptic curve
\begin{equation}
\{(x,X):X^2=(x-r^2)x(x-1)\},\quad r^2 {\textrm{ distinguished}},
\end{equation}
from which $S$ is uniquely determined, provided the distinguished branch-point is simple.
Transformation to the Jacobi form of the curve,
\begin{equation}\label{jc}
\{(x,X):X^2=(x-c)(x+c)(x-1/c)(x+1/c)\},\quad c {\textrm{ distinguished}},
\end{equation}
maps $S$ to the canonical form $S'$ for Q4$^*$.
The M\"obius transformation $\mm$ is characterised by its action $\mm:(0,1,\infty)\rightarrow (-c,1/c,-1/c)$, and takes the form
\begin{equation}
\mm(z)=\frac{2c^2-(1+c^2)z}{c(1+c^2)z-2c}.\label{mm}
\end{equation}
The modulus associated with this Jacobi curve (\ref{jc}), $c^2$, is related to the lattice parameter $r$ by the equation
\begin{equation}\label{cg}
\mm(r^2)=c \quad \Leftrightarrow \quad r^2=\left(\frac{2c}{1+c^2}\right)^2.
\end{equation}
This corresponds to Gauss' transformation \cite{AKH}.
The associations
\begin{equation}
w=\mm(z),\ a=\mm(p^2), \ b=\mm(q^2), \ c=\mm(r^2),
\label{sbs}
\end{equation}
thus connect (\ref{gf}) with the canonical form,
\begin{multline}
S'_{a,b}(w,\wt{w},\wh{w},\th{w})=\\
(a-b)[(c^{-2}a-c^2b)(w\wt{w}-\wh{w}\th{w})^2-(c^{-2}b-c^2a)(w\wh{w}-\wt{w}\th{w})^2]\\
-(a-b)^2[(w+\th{w})^2(1+\wt{w}^2\wh{w}^2)+(\wt{w}+\wh{w})^2(1+w^2\th{w}^2)]\\
+[(w-\th{w})(\wt{w}-\wh{w})(c^{-1}-cab)-2(a-b)(1+w\wt{w}\wh{w}\th{w})]\\
\times[(w-\th{w})(\wt{w}-\wh{w})(c^{-1}ab-c)-2(a-b)(w\th{w}+\wt{w}\wh{w})]=0.
\label{QQ4}
\end{multline}
The solution of equation (\ref{QQ4}) obtained by implementing transformations (\ref{sbs}) on the previously given solution of (\ref{F1}) is:
\begin{equation}\label{QQ4solitons}
w=\frac
{c' f\wb{f}-g\wb{g}}
{c'g\wb{g}-f\wb{f}}, \quad c' = c\varepsilon_1 /k,
\end{equation}
where the parameters $a$, $b$ and $c$ appearing in the equation are given in terms of $\alpha$, $\beta$ and $\gamma$, by
\begin{equation}
\begin{split}
a&=
\frac{\varepsilon_1\dn(\gamma)-\varepsilon_2}{\cn(\gamma)}\times
\frac
{\varepsilon_1\dn^2(\alpha)+\varepsilon_2\dn(\gamma)}
{\varepsilon_1\dn^2(\alpha)-\varepsilon_2\dn(\gamma)},\\
b&=
\frac{\varepsilon_1\dn(\gamma)-\varepsilon_2}{\cn(\gamma)}\times
\frac
{\varepsilon_1\dn^2(\beta)+\varepsilon_2\dn(\gamma)}
{\varepsilon_1\dn^2(\beta)-\varepsilon_2\dn(\gamma)},\\
c&=\frac{\varepsilon_1 \dn(\gamma)+\varepsilon_2}{\cn(\gamma)},
\end{split}
\end{equation}
and where in all of these expressions, $\varepsilon_1$ and $\varepsilon_2$ are any solution of the equations,
\[ \varepsilon_1^2= \frac{1}{k^2}, \quad \varepsilon_2^2 = \frac{{1-k^2}}{k^2}. \]
They correspond to the four possible values for $c$ determined by (\ref{cg}) in the particular case $r=k\cn(\gamma)/\dn(\gamma)$ corresponding to the original relations (\ref{eparams}).

\bibliographystyle{unsrt}
\bibliography{references}
\end{document}